\newtheorem{theorem}{Theorem}
\newtheorem{lemma}{\textbf{Lemma}}
\newtheorem{remark}{\textbf{Remark}}
\newtheorem{corollary}{\textbf{Corollary}}
\newtheorem{proposition}{\textbf{proposition}}
\newtheorem{Conjecture}{\textbf{Conjecture}}
\newcommand{\F}{\mathbb{F}}
\begin{document}

\baselineskip 17pt
\title{\Large\bf How to Expand a Self-orthogonal Code }

\author{Jon-Lark Kim\qquad\qquad Hongwei Liu\qquad\qquad Jinquan Luo}
\footnotetext{Jon-Lark Kim is with  Dept of Math, Sogang University, Seoul Korea. Hongwei Liu and Jinquan Luo are with School of Mathematics
and Statistics \& Hubei Key Laboratory of Mathematical Sciences, Central China Normal University, Wuhan China 430079.\\
 E-mail: jlkim@sogang.ac.kr(J-K.Kim), hwliu@ccnu.edu.cn(H.Liu), luojinquan@ccnu.edu.cn(J.Luo)}

\date{}
\maketitle
\begin{abstract}
  In this paper, we show how to expand  Euclidean/Hermitian self-orthogonal code preserving their orthogonal property.   Our results show that every $k$-dimension Hermitian self-orthogonal code is contained in a $(k+1)$-dimensional Hermitian self-orthogonal code. Also, for $k< n/2-1$,  every $[n,k]$ Euclidean self-orthogonal code is contained in an $[n,k+1]$ Euclidean self-orthogonal code. Moreover, for $k=n/2-1$ and $p=2$,  we can also fulfill the expanding process. But for $k=n/2-1$ and $p$ odd prime, the expanding process can be fulfilled if and only if an extra condition must be satisfied.  We also propose two feasible algorithms on these expanding procedures.
\end{abstract}
{\bf Key words}:  Hermitian self-orthogonal, Euclidian self-orthogonal, Hermitian self-dual, Euclidean self-dual, code expansion.

\section{Introduction}
 Let $\mathbb{F}_q$ denote the finite field with $q$ elements. A subset $C$ of $\mathbb{F}_q^n$ is called a $q$-ary linear code. Its parameters are $[n,k,d]$ where $k$ is the dimension and $d$ is  minimum  (Hamming) distance of $C$, which is the minimal number of positions for any two different vectors of $C$ (called codewords) taking different values. The Euclidean dual $C^\perp$ of $C$ is defined as
\[C^\perp=\left\{(c_1^\prime,c_2^\prime,\ldots,c_n^\prime)\in \mathbb{F}_q^n\mid\sum_{i=1}^nc_i'c_i=0,\:\forall\:(c_1,c_2,\ldots,c_n)\in C\right\}.\]
The code $C$ is Euclidean self-orthogonal if it satisfies $C\subseteq C^\perp$. It is Euclidean self-dual if  $C=C^\perp$.

Similarly,  the Hermitian dual $C^{\perp_H}$ of $C$ is defined as
\[C^{\perp_H}=\left\{(c_1^\prime,c_2^\prime,\ldots,c_n^\prime)\in \mathbb{F}_{q^2}^n\mid\sum_{i=1}^n{c_i^\prime}^qc_i=0,\:\forall\:(c_1,c_2,\ldots,c_n)\in C\right\}.\]
The code $C$ is  Hermitian self-orthogonal if  $C\subseteq C^{\perp_H}$ satisfies. It is  Hermitian self-dual if $C=C^{\perp_H}$.

Self-orthogonal codes have both special structures and many applications in different areas. For instance, both Euclidean(\cite{CRSS}) and Hermitian(\cite{KKKS}) self-orthogonal codes can be applied to construct quantum codes. Furthermore, self-dual code, as a special type of self-orthogonal code, can be applied to construct some $t$-designs, such
as \cite{AM} and \cite{GB}. Besides, they are closely related to lattices and modular forms \cite{BBH}, \cite{H}, \cite{Sv}.  Also, self-dual codes have applications in linear secret sharing. In \cite{CDGU}, linear secret sharing schemes with specific access structure are constructed from self-dual codes.

In this paper, we will expand Euclidean and Hermitian self-orthogonal codes. Our methods are based on Gram-Schimidt orthogonalization procedure.

\section{Expanding Hermitian Self-orthogonal Code}
Some notations:
\begin{itemize}
  \item $r=q^2$, $q$ is prime power.
  \item For $\alpha=(a_1, \cdots, a_n),\beta=(b_1, \cdots, b_n)\in \F_r^n$, the Hermitian inner product $\langle \alpha, \beta\rangle_h=a_1^q b_1+\cdots+a_n^q b_n$.
  \item For a linear code $C$ over $\F_r$, we denote by $C^{\perp_h}$ the Hermitian dual code of $C$.
\end{itemize}
\begin{theorem}
For an $[n,k]_r$ Hermtian self-orthogonal code $C$ with $n>2k+1$, there exists an $[n, k+1]_r$ Hermtian self-orthogonal code $C'$ such that $C$ is a subcode of $C'$.
\end{theorem}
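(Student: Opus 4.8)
The plan is to expand the code by adjoining one new basis vector, and the natural strategy is a Gram–Schmidt-type correction to kill off all the unwanted inner products at once.

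The plan is to produce $C'$ by adjoining a single vector: it suffices to find $w\in\F_r^n$ with $w\notin C$, with $\langle w,c\rangle_h=0$ for all $c\in C$, and with $\langle w,w\rangle_h=0$, and then set $C'=C\oplus\langle w\rangle_{\F_r}$. Because $\langle \alpha,\beta\rangle_h^{\,q}=\langle\beta,\alpha\rangle_h$, the one-sided condition $\langle w,c\rangle_h=0$ already forces $\langle c,w\rangle_h=0$, so such a $w$ makes $C'$ Hermitian self-orthogonal of dimension $k+1$. The orthogonality condition says exactly $w\in C^{\perp_h}$, and since $C$ is self-orthogonal we have $C\subseteq C^{\perp_h}$ with $\dim_{\F_r}C^{\perp_h}=n-k\geq k+2$; thus $C^{\perp_h}\setminus C$ is nonempty, and the whole problem reduces to finding a nonzero Hermitian-isotropic vector in $C^{\perp_h}$ lying outside $C$.

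First I would pass to a quotient. For $c\in C$ and $x\in C^{\perp_h}$ one has $\langle c,x\rangle_h=0$, so $C$ lies in the radical of the Hermitian form restricted to $C^{\perp_h}$; in fact the radical equals $C^{\perp_h}\cap(C^{\perp_h})^{\perp_h}=C^{\perp_h}\cap C=C$. Hence the form descends to a \emph{nondegenerate} Hermitian form on the quotient space $V=C^{\perp_h}/C$, whose dimension is $(n-k)-k=n-2k\geq 2$ by the hypothesis $n>2k+1$. Moreover, for any lift $w$ of a class $\bar w\in V$ the value $\langle w,w\rangle_h$ is independent of the chosen lift, since the cross terms vanish because $C$ lies in the radical and is self-orthogonal. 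So it is enough to exhibit a single nonzero isotropic vector for the nondegenerate form on $V$.

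The key step, and the place where the dimension bound really matters, is the existence of such an isotropic vector. Diagonalizing, the form on $V$ can be written $\langle x,y\rangle=\sum_{i} a_i x_i^{q}y_i$ with $a_i\in\F_q^{*}$ (the diagonal entries lie in $\F_q$ because $\langle v,v\rangle_h$ is fixed by the $q$-power map), and an isotropic vector supported on the first two coordinates amounts to solving $a_1 N(x_1)+a_2 N(x_2)=0$, where $N(x)=x^{q+1}$ is the norm map $\F_r^{*}\to\F_q^{*}$. Since $N$ is surjective, one can take $x_2$ with $N(x_2)=1$ and $x_1$ with $N(x_1)=-a_2/a_1\in\F_q^{*}$, producing a nonzero isotropic vector; this construction collapses in dimension $1$, which is precisely why $n-2k\ge2$ is required. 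Lifting this class to $w\in C^{\perp_h}\setminus C$ and setting $C'=C\oplus\langle w\rangle_{\F_r}$ completes the argument. I expect the only genuine obstacle to be this last existence claim; the reduction to a two-variable norm equation (a Gram--Schmidt-style correction of a vector chosen outside $C$) makes it concrete, while the dimension count and the well-definedness on the quotient are routine.
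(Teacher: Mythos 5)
Your proof is correct and takes essentially the same route as the paper's: both orthogonalize a pair of vectors of $C^{\perp_h}$ lying outside $C$ (your diagonalization of the induced nondegenerate form on $C^{\perp_h}/C$ is the paper's Gram--Schmidt step recast in quotient-space language) and then solve the resulting two-term equation via surjectivity of the norm map $x\mapsto x^{q+1}$ from $\F_{q^2}$ onto $\F_q$. The only cosmetic difference is that the paper first splits into cases according to whether some basis vector of $C^{\perp_h}$ modulo $C$ is already isotropic, whereas your nondegenerate-quotient formulation handles everything uniformly.
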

\begin{proof}
Since $\dim C=k$, then $\dim C^{\perp_h}=n-k\geq k+2$. Choose a basis of $C$ as $\alpha_1, \cdots, \alpha_k$, which can be extended to a basis of $C^{\perp_h}$ as $\alpha_1, \cdots, \alpha_k, \beta_1, \cdots, \beta_{n-2k}$. Note that $n-2k\geq 2$.
\begin{itemize}
  \item If there exists some $i$ satisfying $\langle \beta_i, \beta_i\rangle_h=0$, then $C'$ linearly generated by $\alpha_1, \cdots, \alpha_k, \beta_i$ is  a Hermtian self-orthogonal code containing $C$.
  \item Otherwise for any $i$, $\langle \beta_i, \beta_i\rangle_h\neq 0$. Then Gram-Schmidt orthogonalization  procedure of $\beta_1, \beta_2$ shows that
  \[\gamma_1=\beta_1, \gamma_2=\beta_2-\frac{\langle \beta_1, \beta_2\rangle_h}{\langle \beta_1, \beta_1\rangle_h}\beta_1.\]
  Then $\alpha_1, \cdots, \alpha_k, \gamma_1, \gamma_2$ is linear independent and  $\alpha_1, \cdots, \alpha_k$ is Hermitian orthogonal to $\gamma_1, \gamma_2$. Moreover, $\gamma_1$ is Hermitian orthogonal to $\gamma_2$.  We try to find vector of the form $t\gamma_1+\gamma_2$ with $t\in \F_r$ which is self Hermitian orthogonal. Equation
  \[\langle t\gamma_1+ \gamma_2, t\gamma_1+\gamma_2\rangle_h=0\]
  can be transfomed to
  \[\langle \gamma_1, \gamma_1\rangle_h t^{q+1}+\langle\gamma_2, \gamma_2\rangle_h =0.\]
  Since $\langle \gamma_1, \gamma_1\rangle_h\in \F_q^*$ and $\langle\gamma_2, \gamma_2\rangle_h\in \F_q$,
  \[t^{q+1}=-\frac{\langle\gamma_2, \gamma_2\rangle_h}{\langle \gamma_1, \gamma_1\rangle_h}\]
  has solution(s) in $\F_r$. Precisely,
     \begin{itemize}
       \item if $\langle\gamma_2, \gamma_2\rangle_h=0$, then we only have one solution $t=0$.
       \item if $\langle\gamma_2, \gamma_2\rangle_h\neq 0$, then we have $q+1$ solutions in $\F_r$.
     \end{itemize}
  Nevertheless, let $t_0$ be a solution. Denote by $\gamma=t_0\gamma_1+ \gamma_2$. Then $C'$ linear spanned by $\alpha_1, \cdots, \alpha_k, \gamma$ is  a Hermtian self-orthogonal code containing $C$.
\end{itemize}
\end{proof}

\begin{corollary}
For an $[n,k]_q$ Hermitian self-orthogonal code $C$ with $n=2k+l, l\geq 2$, there exists a tower of Hermitian self-orthogonal linear codes $C_i$ ($1\leq i\leq \lfloor \frac{l}{2}\rfloor$) such that $\dim C_{i+1}=\dim C_i+1$:
\[C=C_0\subseteq C_1\subseteq\cdots\subseteq C_{\lfloor \frac{l}{2}\rfloor}.\]
\end{corollary}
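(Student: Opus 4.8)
The plan is to build the tower by iterating the Theorem, each application raising the dimension by exactly one. Set $C_0=C$, which is an $[n,k]_r$ Hermitian self-orthogonal code with $n=2k+l$. I would argue by induction: suppose I have already produced a Hermitian self-orthogonal code $C_i$ of type $[n,k+i]_r$ with $C\subseteq C_i$, for some index $i$ in the range $0\le i\le \lfloor l/2\rfloor-1$. To feed $C_i$ into the Theorem and obtain $C_{i+1}$, the only thing I must check is that the Theorem's hypothesis holds for $C_i$, namely $n>2(k+i)+1$.

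The heart of the argument is therefore this bookkeeping. Since $n=2k+l$, the inequality $n>2(k+i)+1$ is equivalent to $l>2i+1$, and because $l$ and $i$ are integers this is the same as $i\le \lfloor l/2\rfloor-1$. I would verify this equivalence separately in the two parities: for $l=2m$ it reads $i\le m-1=\lfloor l/2\rfloor-1$, and for $l=2m+1$ the bound $i\le (l-2)/2=m-\tfrac12$ again forces $i\le m-1=\lfloor l/2\rfloor-1$. Thus, under the inductive hypothesis the Theorem applies to $C_i$ and yields an $[n,k+i+1]_r$ Hermitian self-orthogonal code $C_{i+1}$ with $C_i\subseteq C_{i+1}$. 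Running this from $i=0$ up through $i=\lfloor l/2\rfloor-1$ produces the desired chain
\[
C=C_0\subseteq C_1\subseteq\cdots\subseteq C_{\lfloor l/2\rfloor},
\]
with $\dim C_{i+1}=\dim C_i+1$ at every step.

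I expect the only delicate point to be precisely this floor arithmetic: one must confirm that the Theorem's hypothesis $n>2\dim+1$ persists through the very last application, the step from $C_{\lfloor l/2\rfloor-1}$ to $C_{\lfloor l/2\rfloor}$, and does not carry us any further. No new construction beyond the Theorem is required, so the obstacle is purely combinatorial rather than structural. It is worth remarking that the tower terminates exactly where it should: when $l$ is even the top code $C_{l/2}$ has dimension $k+l/2=n/2$ and is hence Hermitian self-dual, while when $l$ is odd the top code has dimension $(n-1)/2$, falling one short of self-dual, so that $n-2\dim C_{\lfloor l/2\rfloor}=1$ and the Theorem can no longer be invoked.
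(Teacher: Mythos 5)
Your proof is correct and takes exactly the same route as the paper, whose entire proof is the single sentence ``Repeating the process in the above theorem, we obtain the code tower.'' The floor-arithmetic bookkeeping you carry out (verifying that $n>2(k+i)+1$ holds precisely for $i\le \lfloor l/2\rfloor-1$, so the iteration runs exactly to $C_{\lfloor l/2\rfloor}$ and no further) is just the detail the paper leaves implicit, and you have verified it correctly.
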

\begin{proof}
Repeating the process in the above theorem, we obtain the code tower.
\end{proof}

For a Hermtitian self-orthogonal $[n,k]$ code $C$ with $n>2k+1$. The expanding algorithm could be depicted as follows.
\begin{itemize}
  \item[Step 1]: Choose a  basis $\Omega=\{\alpha_1, \cdots, \alpha_k\}$ of $C$.
  \item[Step 2]: Choose linear independent $\beta_1, \beta_2\in C^{\perp_h}\backslash C$.
  \item[Step 3]: If some $\beta_i$ is self-orthogonal, set
  \[C_1=\mathrm{span}(\alpha_1, \cdots, \alpha_k, \beta_i).\]
 Otherwise, set
 \[\gamma_1=\beta_1, \gamma_2=\beta_2-\frac{\langle \beta_1, \beta_2\rangle_h}{\langle \beta_1, \beta_1\rangle_h}\beta_1.\]
 Choose $t_0\in \F_r$ satisfying
 \[t_0^{q+1}=-\frac{\langle\gamma_2, \gamma_2\rangle_h}{\langle \gamma_1, \gamma_1\rangle_h}.\]
 Set
 \[C_1=\mathrm{span}(\alpha_1, \cdots, \alpha_k, t_0\gamma_1+ \gamma_2).\]
\end{itemize}
As a result, the $[n,k+1]$ code $C_1$ containing $C$ is also Hermitian self-orthogonal.

In the end of this section we will discuss minimal distance.
\begin{proposition}
Let $C$ be an $[n,k]$ Hermtitian self-orthogonal code over $\F_{q^2}$ with $n>2k+1$. Suppose the dual code of $C$ has minimal distance $d$.   Then for every $[n,k+1]$ Hermtitian self-orthogonal code $C_1$ containing $C$, both $C_1$ and  $C_1^{\perp_h}$ has minimal distance at least $d$.
\end{proposition}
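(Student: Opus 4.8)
The plan is to deduce everything from two elementary facts about linear codes: the Hermitian dual reverses inclusions, and passing to a subcode cannot decrease the minimum distance. The strategy is to show that both $C_1$ and $C_1^{\perp_h}$ are contained in $C^{\perp_h}$, so that their minimum distances are automatically bounded below by $d = d(C^{\perp_h})$.

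First I would record the relevant chain of inclusions. Since $C \subseteq C_1$, applying the Hermitian dual---which reverses containment, directly from the definition of $\langle \cdot, \cdot \rangle_h$---gives $C_1^{\perp_h} \subseteq C^{\perp_h}$. This already settles the dual code: as a subcode of $C^{\perp_h}$, the code $C_1^{\perp_h}$ has no codeword of smaller nonzero weight than the smallest occurring in $C^{\perp_h}$, so $d(C_1^{\perp_h}) \geq d$.

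Next I would invoke that $C_1$ is itself Hermitian self-orthogonal, i.e. $C_1 \subseteq C_1^{\perp_h}$. Combining this with the inclusion just obtained yields
\[ C_1 \subseteq C_1^{\perp_h} \subseteq C^{\perp_h}, \]
so $C_1$ is also a subcode of $C^{\perp_h}$, and the same subcode argument gives $d(C_1) \geq d$. This completes both halves of the claim.

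The argument is short, and there is no genuine obstacle: the only points to verify are the inclusion-reversing property of the Hermitian dual and the monotonicity of minimum distance under passage to subcodes, both of which are standard. The entire content lies in arranging the inclusions so that $C_1$ and $C_1^{\perp_h}$ are simultaneously trapped inside $C^{\perp_h}$. I would also remark that the hypothesis $n > 2k+1$ and the very existence of such a $C_1$ are furnished by the preceding theorem; beyond guaranteeing that $C_1$ exists, they play no further role in this distance estimate.
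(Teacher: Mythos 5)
Your proof is correct and is essentially identical to the paper's: both arguments rest on the inclusion chain $C_1 \subseteq C_1^{\perp_h} \subseteq C^{\perp_h}$ (obtained from self-orthogonality of $C_1$ and the inclusion-reversing property of the Hermitian dual applied to $C \subseteq C_1$), together with the fact that a subcode has minimum distance at least that of the containing code.
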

\begin{proof}
The inclusion chain
\[C\subset C_1\subset C_1^{\perp_h} \subset C^{\perp_h}\]
implies
\[\mathrm{d}(C)\geq \mathrm{d}(C_1)\geq \mathrm{d}(C_1^{\perp_h})\geq \mathrm{d}(C^{\perp_h})=d.\]
\end{proof}
Here we present an open problem.
\[\text{\bf Open\; Problem}: \text{fix}\; C\, \text{as\, above, \, how\, can\, we\, find}\; C_1\, \text{with\, minimal\, distance\, as\, large\, as\, possible?}\]

\section{Expanding Euclidean Self-orthogonal Code}
Some notations:
\begin{itemize}
  \item Let $q=p^m$ be prime power.
  \item For $\alpha=(a_1, \cdots, a_n),\beta=(b_1, \cdots, b_n)\in \F_q^n$, the Euclidean inner product $\langle \alpha, \beta\rangle=a_1 b_1+\cdots+a_n b_n$.
  \item For a linear code $C$ over $\F_q$, we denote by $C^{\perp}$ the dual code of $C$.
\end{itemize}

Now we start to expand Euclidean self-orthogonal code.
\begin{theorem}\label{expand2}
For an $[n,k]_q$ Euclidean self-orthogonal code $C$ with $n\geq 2k+3$ , there exists an $[n, k+1]_q$ Euclidean self-orthogonal code $C'$ such that $C$ is a subcode of $C'$ if
\begin{itemize}
  \item[(i)] $p$ odd and $n\geq 2k+3$ or
  \item[(ii)] $p=2$ and  $n\geq 2k+2$.
\end{itemize}
\end{theorem}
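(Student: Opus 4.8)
The plan is to mirror the Hermitian argument from Theorem 1, but account for the subtlety that the Euclidean self-pairing $\langle v,v\rangle = \sum v_i^2$ is a quadratic form rather than a Hermitian form, so the equation we must solve is a quadratic (not a norm equation $t^{q+1}=c$) and its solvability depends on $p$.

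First I would set up exactly as before. Since $\dim C = k$ and $C \subseteq C^\perp$, we have $\dim C^\perp = n-k \geq k+3$ in the odd case (respectively $n-k \geq k+2$ when $p=2$), so $C^\perp$ strictly contains $C$ and we may extend a basis $\alpha_1,\dots,\alpha_k$ of $C$ to a basis $\alpha_1,\dots,\alpha_k,\beta_1,\dots,\beta_{n-2k}$ of $C^\perp$. All the $\beta_i$ are automatically Euclidean-orthogonal to every $\alpha_j$, so any $\gamma \in \mathrm{span}(\beta_1,\dots,\beta_{n-2k})$ that is self-orthogonal will yield a self-orthogonal $C' = \mathrm{span}(\alpha_1,\dots,\alpha_k,\gamma)$ containing $C$. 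The whole problem thus reduces to finding a nonzero self-orthogonal vector inside the space $W = \mathrm{span}(\beta_1,\dots,\beta_{n-2k})$ of dimension $m := n-2k$, under the Euclidean form restricted to $W$.

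The key step is the search for this isotropic vector. If some $\beta_i$ already satisfies $\langle \beta_i,\beta_i\rangle = 0$ we are done immediately. Otherwise every $\beta_i$ is anisotropic, and I would apply Gram--Schmidt to $\beta_1,\beta_2$ to get orthogonal $\gamma_1,\gamma_2$ with $\langle\gamma_1,\gamma_2\rangle=0$, then look for $\gamma = t\gamma_1 + \gamma_2$ with
\[
\langle \gamma,\gamma\rangle = \langle\gamma_1,\gamma_1\rangle\, t^2 + \langle\gamma_2,\gamma_2\rangle = 0,
\]
i.e. $t^2 = -\langle\gamma_2,\gamma_2\rangle/\langle\gamma_1,\gamma_1\rangle$. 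When $p=2$ the Frobenius $t\mapsto t^2$ is a bijection on $\F_q$, so a solution always exists; this is why the bound $n\geq 2k+2$ (hence $m\geq 2$) suffices in case (ii). When $p$ is odd, however, the right-hand side may be a nonsquare, in which case two orthogonal anisotropic vectors are not enough. This is the main obstacle. To overcome it I would bring in a third orthogonal vector $\gamma_3$ (Gram--Schmidt on $\beta_1,\beta_2,\beta_3$, available precisely because $n\geq 2k+3$ forces $m\geq 3$) and solve the diagonal ternary equation
\[
\langle\gamma_1,\gamma_1\rangle\, x^2 + \langle\gamma_2,\gamma_2\rangle\, y^2 + \langle\gamma_3,\gamma_3\rangle\, z^2 = 0
\]
nontrivially. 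The standard counting/Chevalley--Warning fact that a nondegenerate quadratic form in three variables over a finite field $\F_q$ is isotropic guarantees a nonzero solution $(x,y,z)$, and then $\gamma = x\gamma_1 + y\gamma_2 + z\gamma_3$ is the desired self-orthogonal vector.

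I expect the genuinely delicate point to be the odd-characteristic case and the reason the hypothesis is $n\geq 2k+3$ rather than $n\geq 2k+2$: one must verify that $\gamma$ is nonzero (so that $C'$ genuinely has dimension $k+1$) and linearly independent from $C$, which follows since $\gamma \in W\setminus\{0\}$ and $W \cap C = \{0\}$ by construction. I would phrase the ternary-form solvability either via the explicit pigeonhole argument (the sets $\{ax^2\}$ and $\{-by^2 - c\}$ each have more than $q/2$ elements and must intersect, the classical two-squares trick applied twice) or by citing the standard result that every quadratic form in at least three variables over a finite field has a nontrivial zero. Assembling these into the two cases (i) and (ii) then completes the proof, with the $p=2$ case handled by the binary Gram--Schmidt step and the odd case by the ternary step.
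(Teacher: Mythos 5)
Your proposal is correct and follows essentially the same route as the paper's proof: extend a basis of $C$ to one of $C^\perp$, diagonalize $\beta_1,\beta_2,\beta_3$ by Gram--Schmidt, handle $p=2$ via the bijectivity of $t\mapsto t^2$, and solve the diagonal ternary equation in the odd case by the same pigeonhole count on the two sets of size $\frac{q+1}{2}$ (the paper uses exactly this explicit counting rather than citing Chevalley--Warning). The only cosmetic difference is that you allow a general nontrivial solution $(x,y,z)$ and verify nondegeneracy via $W\cap C=\{0\}$, whereas the paper fixes the coefficient of $\gamma_1$ to be $1$ so the resulting vector is automatically nonzero.
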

\begin{proof}
Since $\dim C=k$, then $\dim C^{\perp}=n-k$. Choose a basis of $C$ as $\alpha_1, \cdots, \alpha_k$, which can be extended to a basis of $C^{\perp}$ as $\alpha_1, \cdots, \alpha_k, \beta_1, \cdots, \beta_{n-2k}$.
\begin{itemize}
  \item If there exists some $i$ satisfying $\langle \beta_i, \beta_i\rangle=0$, then $C'$ linearly generated by $\alpha_1, \cdots, \alpha_k, \beta_i$ is  a Euclidean self-orthogonal code containing $C$.
  \item Otherwise for any $i$, $\langle \beta_i, \beta_i\rangle\neq 0$. Then Gram-Schmidt  orthogonalization  procedure of $\beta_1, \beta_2,\beta_3$ shows that
   \begin{equation}\label{orth}
   \gamma_1=\beta_1, \gamma_2=\beta_2-\frac{\langle \beta_1, \beta_2\rangle}{\langle \beta_1, \beta_1\rangle}\beta_1.
    \end{equation}
    If $\langle \gamma_2, \gamma_2\rangle=0$, then $C'$ linearly generated by $\alpha_1, \cdots, \alpha_k, \gamma_2$ is  a Euclidean self-orthogonal code containing $C$.
     Now we assume $\langle \gamma_2, \gamma_2\rangle\neq 0$.

\begin{itemize}
  \item[(i)]  In the case of $n-2k\geq 3$ and $p$ odd prime.  Set \[\gamma_3=\beta_3-\frac{\langle \gamma_1, \beta_3\rangle}{\langle \gamma_1, \gamma_1\rangle}\gamma_1-\frac{\langle \gamma_2, \beta_3\rangle}{\langle \gamma_2, \gamma_2\rangle}\gamma_2.\]
        Then $\alpha_1, \cdots, \alpha_k, \gamma_1, \gamma_2, \gamma_3$ is linear independent and  $\alpha_1, \cdots, \alpha_k$ is orthogonal to $\gamma_1, \gamma_2,\gamma_3$. Moreover, $\gamma_1, \gamma_2, \gamma_3 $ are pairwise orthogonal.   We try to find vector of the form $\gamma_1+s\gamma_2+t\gamma_3$ with $s,t\in \F_q$ which is Euclidean self-orthogonal. If $\langle \gamma_3, \gamma_3\rangle=0$, then $C'$ spanned by $\alpha_1, \cdots, \alpha_k, \gamma_3$ is  a desired code. In the following we assume $\langle \gamma_3, \gamma_3\rangle\neq 0$. Equation
  \[\langle \gamma_1+s\gamma_2+t\gamma_3,\gamma_1+s\gamma_2+t\gamma_3\rangle=0\]
  can be expanded to
  \[\langle \gamma_1, \gamma_1\rangle +s^2\langle\gamma_2, \gamma_2\rangle+t^2\langle\gamma_3, \gamma_3\rangle =0.\]
  The above equation has approximately $q$ solutions $(s,t)\in \F_q^2$ using quadratic Jacobi sums. However, we could deal it in an elementary way to seek one solution. Consider
  \[S=\left\{\langle \gamma_1, \gamma_1\rangle +s^2\langle\gamma_2, \gamma_2\rangle\,|\, s\in \F_q\right\}, \qquad T=\left\{-t^2\langle\gamma_3, \gamma_3\rangle\,|\, t\in \F_q\right\}.\]
  Since $|S|=|T|=\frac{q+1}{2}$, then $|S\cap T|\geq 1$ which implies that there exists $s_0, t_0\in \F_q$ such that $\gamma_0:=\gamma_1+s_0\gamma_2+t_0\gamma_3$ is Euclidean self-orthogonal. Moreover, $(s_0, t_0)\neq (0,0)$. In this case, $C'$ spanned by $\alpha_1, \cdots, \alpha_k, \gamma_0$ is  a desired code.
  \item[(ii)] In the case of $n-2k\geq 2$ and $p=2$. The equation $\langle\gamma_1+s\gamma_2, \gamma_1+s\gamma_2\rangle$ is transformed into
  \[\langle\gamma_1, \gamma_1\rangle+s^2 \langle\gamma_2, \gamma_2\rangle=0\]
  which has some solution $t=t_0\in \F_{2^{m}}$. Set $\gamma_0=\gamma_1+s_0\gamma_2$. Then the code $C'$ spanned by $\alpha_1, \cdots, \alpha_k, \gamma_0$ is  a desired code.
\end{itemize}

\end{itemize}
\end{proof}

There is one remaining case, that is,  $n=2k+2$ and $p$ is odd prime. The code $C$ in Theorem \ref{expand2} can be expanded to  $C'$ if and only if $-\langle\gamma_1, \gamma_1\rangle \langle\gamma_2, \gamma_2\rangle$ is a square in $\F_q$.

As a result, a tower consisting Euclidean self-orthogonal codes could be depicted.

\begin{corollary}\label{coro2}
For an $[n,k]_q$ Euclidean self-orthogonal code $C$ with $n\geq 2k+2$,
there exists a tower of Euclidean self-orthogonal linear codes $C_i$ ($1\leq i\leq r$) such that $\dim C_{i+1}=\dim C_i+1$:
\[C=C_0\subseteq C_1\subseteq\cdots\subseteq C_{r}.\]
In particular, $C_r$ has dimension
\begin{itemize}
  \item[(1)] $\frac{n-1}{2}$, if $n$ is odd (here $r=\frac{n-1}{2}-k$);
  \item[(2)] $\frac{n}{2}-1$, if $p$ is odd and $n$ is even (here $r=\frac{n}{2}-k-1$);
  \item[(3)] $\frac{n}{2}$, if $p=2$ and $n$ is even (here $r=\frac{n}{2}-k$)
\end{itemize}
\end{corollary}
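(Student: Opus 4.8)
The plan is to construct the tower by iterating Theorem~\ref{expand2}: set $C_0=C$ and, having produced a Euclidean self-orthogonal code $C_i$ of dimension $k+i$, apply the theorem to obtain $C_{i+1}\supseteq C_i$ of dimension $k+i+1$. The entire content of the corollary then reduces to determining, in each of the three cases, how many times this single-step expansion can be repeated before the dimension hypothesis of Theorem~\ref{expand2} fails.

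First I would frame this as a finite induction on $i$. Since $C_i$ is an $[n,k+i]_q$ Euclidean self-orthogonal code, Theorem~\ref{expand2} applies and yields $C_{i+1}$ exactly when the relevant inequality holds: $n\geq 2(k+i)+3$ when $p$ is odd, and $n\geq 2(k+i)+2$ when $p=2$. Hence the expansion continues as long as $k+i$ stays at or below the threshold imposed by these inequalities, and stops at the first dimension where the inequality is violated. I would then read off that terminal dimension in each case.

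Carrying out the bookkeeping: for $p$ odd, the condition $n\geq 2(k+i)+3$ holds iff $k+i\leq\lfloor(n-3)/2\rfloor$, which is $(n-3)/2$ for $n$ odd and $(n/2)-2$ for $n$ even; the final permissible expansion then lands at dimension $(n-1)/2$ (case (1)) or $(n/2)-1$ (case (2)). For $p=2$, the condition $n\geq 2(k+i)+2$ holds iff $k+i\leq\lfloor(n-2)/2\rfloor$, giving terminal dimension $(n-1)/2$ for $n$ odd (again case (1)) and $n/2$ for $n$ even (case (3)). In every instance $r$ is simply the difference between the terminal dimension and $k$, which matches the stated values.

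The step I expect to require the most care is the boundary in the $p$ odd, $n$ even case, where the iteration must halt at dimension $(n/2)-1$ rather than $(n/2)$. The reason is that a self-orthogonal code of dimension $(n/2)-1$ falls into the ``remaining case'' $n=2k'+2$ discussed in the remark following Theorem~\ref{expand2}, in which expansion is possible only when $-\langle\gamma_1,\gamma_1\rangle\langle\gamma_2,\gamma_2\rangle$ is a square in $\F_q$, and so cannot be guaranteed in general. I would also verify consistency with the hypothesis $n\geq 2k+2$: when $n=2k+2$ and $p$ is odd, the formula gives $r=0$, so the tower is trivial, $C$ already attains the terminal dimension $(n/2)-1$, and no expansion is attempted---hence the exceptional case of the remark never needs to be entered.
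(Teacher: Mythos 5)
Your proposal is correct and follows exactly the paper's own approach: the paper's entire proof is ``Repeating the process in Theorem~\ref{expand2}, we obtain the code tower,'' and your argument is precisely this iteration, with the bookkeeping of the thresholds $n\geq 2(k+i)+3$ (for $p$ odd) and $n\geq 2(k+i)+2$ (for $p=2$) spelled out. Your explicit verification of the terminal dimensions and of the boundary case $n=2k+2$, $p$ odd (where $r=0$) is a careful elaboration of what the paper leaves implicit.
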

\begin{proof}
Repeating the process in Theorem \ref{expand2}, we obtain the code tower.
\end{proof}

It should be noted that when $n$ is even and $p$ is an odd prime, the code tower in the above result may not be the longest. Precisely, for $C_{r}$ with $r=\frac{n}{2}-k-1$ in the above code chain, if  linear independent orthogonal pair $\gamma_1, \gamma_2\in C^{\perp}\backslash C$ satisfies $-\langle\gamma_1, \gamma_1\rangle \langle\gamma_2, \gamma_2\rangle$ is a square in $\F_q$, then there exists Euclidean self-dual code $C_{r+1}\supset C_r$.

\begin{remark}
Based on the above results, we may try to find self-orthogonal codes of maximal dimension. Here are some observations.
\begin{itemize}
  \item For any self-orthogonal $[2k+2, k]_q$ code $C$  with $q\equiv 3\pmod 4$ and $k\equiv 0\pmod 2$, there does not exist  self-dual $[2k+2, k+1]_q$ code containing $C$. Assume, on the contrary, $C'$ is a self dual $[2k+2, k+1]_q$ code with systematic generator matrix ($I_{k+1}$ is the identity matrix of rank $k+1$)
      \[G=[I_{k+1}, P].\]
      The self-dual property of $C'$ yields $GG^T=O$ which is equivalent to $I_{k+1}=-PP^T$. Taking determinants on both sides, we obtain $-1=\left(\det P\right)^2$ which is absurd since $-1$ is not a square in $\F_q^*$.
  \item  For a $[4,1]$ Euclidean self-orthogonal linear code $C$ generated by $(a_1, a_2, a_3, a_4)$. We may assume $a_1a_2\neq 0$. Then the code $C$ can be expanded to a self-dual code $C'$ generated by $(a_1, a_2, a_3, a_4)$ and $(-a_3, -a_4, a_1, a_2)$.
\end{itemize}
\end{remark}

For a Euclidean self-orthogonal $[n,k]$ code $C$ with $n>2k+2$. The expanding algorithm could be depicted as follows.
\begin{itemize}
  \item[Step 1]: Choose a  basis $\Omega=\{\alpha_1, \cdots, \alpha_k\}$ of $C$.
  \item[Step 2]: Choose linear independent $\beta_1, \beta_2,\beta_3\in C^{\perp_h}\backslash C$.
  \item[Step 3]:
    \begin{itemize}
      \item[Case 1]  If $\beta_1$ is self-orthogonal, set
       \[C_1=\mathrm{span}(\alpha_1, \cdots, \alpha_k, \beta_1).\]
      \item[Case 2] If $\beta_1$ is not self-orthogonal,
            set
          \[\gamma_1=\beta_1, \gamma_2=\beta_2-\frac{\langle \beta_1, \beta_2\rangle}{\langle \beta_1, \beta_1\rangle}\beta_1.\]
            \begin{itemize}
              \item[Subcase 1] If $\gamma_2$ is self-orthogonal, set
       \[C_1=\mathrm{span}(\alpha_1, \cdots, \alpha_k, \gamma_2).\]
              \item[Subcase 2] If $\gamma_2$ is not self-orthogonal, calculate
              \[\gamma_3=\beta_3-\frac{\langle \gamma_1, \beta_3\rangle}{\langle \gamma_1, \gamma_1\rangle}\gamma_1-\frac{\langle \gamma_2, \beta_3\rangle}{\langle \gamma_2, \gamma_2\rangle}\gamma_2.\]
              Find $s,t\in \F_q$ satisfies
              \[\langle \gamma_1, \gamma_1\rangle +s^2\langle\gamma_2, \gamma_2\rangle+t^2\langle\gamma_3, \gamma_3\rangle =0.\]
              Set
               \[C_1=\mathrm{span}(\alpha_1, \cdots, \alpha_k, \gamma_1+ s\gamma_2+t\gamma_3).\]
            \end{itemize}

    \end{itemize}
\end{itemize}
As a result, the $[n,k+1]$ code $C_1$ containing $C$ is also Euclidean self-orthogonal. We can also present an algorithm for $n=2k+2$ if $-\langle\gamma_1, \gamma_1\rangle \langle\gamma_2, \gamma_2\rangle$ is a square in $\F_q$ in a similar way.

It should be noted that in most cases, Euclidean self-dual codes exist, which has been shown in \cite{BCR}, Prop. 3.6; \cite{GR}, Theo. 7;  \cite{M}, pp. 150 and \cite{P}, Cor. 3.1. Here we provide a feasible algorithm to increase the dimension of Euclidean self-orthogonal code by one at each procedure until we get a Euclidean self-dual code(if exists).

In the end of this section we discuss minimal distance. The result and proof are similar to the Hermitian case.
\begin{proposition}
Let $C$ be an $[n,k]$ Euclidean self-orthogonal code over $\F_{q}$ with $n>2k+2$ whose dual code has minimum distance $d$.  Then for every $[n,k+1]$ Euclidean self-orthogonal code $C_1$ containing $C$, both $C_1$ and  $C_1^{\perp_h}$ has minimal distance at least $d$.
\end{proposition}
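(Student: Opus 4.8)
The plan is to mirror the Hermitian proposition exactly: the only ingredients needed are the order-reversing property of the Euclidean dual together with the elementary fact that passing to a subcode cannot decrease the minimum distance. First I would record the inclusion chain
\[C \subseteq C_1 \subseteq C_1^{\perp} \subseteq C^{\perp}.\]
Here the first inclusion is the hypothesis that $C_1$ contains $C$; the second is the self-orthogonality of $C_1$ (i.e. $C_1 \subseteq C_1^{\perp}$); and the third follows from $C \subseteq C_1$ by applying the Euclidean dual, which reverses inclusions. Since $\dim C_1 = k+1$ and $\dim C_1^{\perp} = n-(k+1)$, the hypothesis $n > 2k+2$ guarantees $k+1 < n-k-1$, so these are genuine (proper) inclusions; for the distance argument, however, only the set-theoretic containments are used.

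Next I would invoke the monotonicity of minimum distance under inclusion: if $A \subseteq B$ are linear codes over $\F_q$, then every nonzero codeword of $A$ is in particular a nonzero codeword of $B$, so the minimum Hamming weight is computed over a subset of the weights available in $B$, whence $\mathrm{d}(A) \geq \mathrm{d}(B)$. Applying this fact to each successive link of the chain yields
\[\mathrm{d}(C) \geq \mathrm{d}(C_1) \geq \mathrm{d}(C_1^{\perp}) \geq \mathrm{d}(C^{\perp}) = d.\]
In particular $\mathrm{d}(C_1) \geq d$ and $\mathrm{d}(C_1^{\perp}) \geq d$, which is exactly the assertion.

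There is essentially no hard step here, and this is precisely why the paper can remark that the result and proof are similar to the Hermitian case. The single point that requires any care is the direction of the third inclusion: one must remember that forming the dual is order-reversing, so that $C \subseteq C_1$ produces $C_1^{\perp} \subseteq C^{\perp}$ and not the reverse. Reversing this containment would break the chain and invalidate the weight inequalities; everything else in the argument is purely formal.
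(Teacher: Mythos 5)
Your proof is correct and is essentially identical to the paper's own argument: the same inclusion chain $C \subseteq C_1 \subseteq C_1^{\perp} \subseteq C^{\perp}$ followed by monotonicity of minimum distance under inclusion. You simply spell out the justification of each link (hypothesis, self-orthogonality of $C_1$, order-reversal of duality) in more detail than the paper does.
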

\begin{proof}
The inclusion chain
\[C\subset C_1\subset C_1^{\perp} \subset C^{\perp}\]
implies
\[\mathrm{d}(C)\geq \mathrm{d}(C_1)\geq \mathrm{d}(C_1^{\perp})\geq \mathrm{d}(C^{\perp})\geq d.\]
\end{proof}

Here we can  propose an open problem.

{\bf Open Problem:} For a fixed Euclidean self-orthogonal code $C$ with parameters $[n,k]$, how can we find   an $[n,k+1]$ Euclidean self-orthogonal code $C_1\supset C$ with largest minimal distance?

\section{Conclusion}
\quad In this paper we discussed the expanding process of Hermitian/Euclidean self-orthogonal codes. Our result showed that in the Hermitian case, every  Hermitian self-orthogonal code can be expanded to a larger Hermitian self-orthogonal code, in which every step the dimension increases by 1. This process terminates at Hermitian self-dual code(even length) or almost self-dual code(odd length).   While in Euclidean case, this process terminates at self-dual code if one more condition should be satisfied. At the end of each case, we discussed the minimal distance of the expanded code $C_1$. Meanwhile, we proposed two open problems on how to find $C_1$ with minimal distance as large as possible.

\end{document}